\documentclass[12pt]{article}
\usepackage[final]{epsfig}
\usepackage{natbib}
\setcitestyle{numbers,square}
\usepackage{float}
\usepackage[T1]{fontenc}
\usepackage{authblk}
\usepackage{natbib}
\usepackage{latexsym}
\usepackage{comment}
\usepackage{amsthm}
\usepackage{verbatim,color}
\usepackage{graphicx}
\usepackage{amssymb}
\usepackage{amssymb, color}
\usepackage{amsmath}
\usepackage{latexsym}
\usepackage{wrapfig}
\usepackage{subfig}
\usepackage[utf8]{inputenc}
\usepackage{multicol}
\usepackage{url}
\usepackage{siunitx}
\usepackage{hyperref}
\usepackage{doi}
\usepackage{makecell}
\usepackage{enumitem}

% set page format suitable for 8.5 by 11 paper
\setlength{\oddsidemargin}{-0.25in}
\setlength{\evensidemargin}{-0.25in}
\setlength{\textwidth}{7.0in}
\setlength{\topmargin}{-0.5in}
\setlength{\textheight}{9.0in}
\setlength{\belowcaptionskip}{2mm}
% define new commands
\makeatletter
\def\blfootnote{\xdef\@thefnmark{}\@footnotetext}
\makeatother

\newcommand{\bfa}{{\bf a}}

\newcommand{\bfc}{{\bf c}}

\newcommand{\bfe}{{\bf e}}
\newcommand{\bff}{{\bf f}}

\newcommand{\bfm}{{\bf m}}

\newcommand{\bfr}{{\bf r}}
\newcommand{\bfs}{{\bf s}}
\newcommand{\bft}{{\bf t}}

\newcommand{\bfx}{{\bf x}}

\newcommand{\bfz}{{\bf z}}

\newcommand{\bfI}{{\bf I}}

\newcommand{\bfQ}{{\bf Q}}
\newcommand{\bfR}{{\bf R}}

\newcommand{\beq}{\begin{equation}}
\newcommand{\eeq}{\end{equation}}
\newcommand{\beqs}{\begin{eqnarray}}
\newcommand{\eeqs}{\end{eqnarray}}

\newcommand{\calG}{{\cal G}}
\newcommand{\calH}{{\cal H}}

\newcommand{\calR}{{\cal R}}
\newcommand{\calS}{{\cal S}}
\newcommand{\calT}{{\cal T}}

\newcommand{\Rmnum}[1]{\text{\uppercase\expandafter{\romannumeral #1}}}
 % slashed integral, math
 % slashed integral, text
\newtheorem{theorem}{Theorem}[section]

\newtheorem{definition}{Definition}[section]

% Hanlin's commands
%%% DEFINE NEW COMMANDS
		%mathsymbol: bold tensor symbol
				%text

 			%degree
			%tensor product
				%left
				%right
		%celsius unit
		%celsius unit
\newcommand{\bfig}{\begin{figure}[!h]}	%begin figure
\newcommand{\efig}{\end{figure}}		%end figure
	
\newcommand{\T}{\mathrm{T}}

\begin{document}

\begin{center}
{\huge Objective Moiré Patterns} \\

\vspace{5mm}
\normalsize

\vspace{2mm}
{\large
Fan Feng}

\vspace{2mm}
Email: fanfeng@pku.edu.cn

\vspace{2mm}
{Department of Mechanics and Engineering Science, College of Engineering,\\
Peking University, Beijing 100871,  China}
\end{center}

\vspace{0.2in}
\normalsize

\noindent {\normalsize {\bf Abstract.}  } 
Moiré patterns, typically formed by overlaying two layers of two-dimensional materials, exhibit an effective long-range periodicity that depends on the short-range periodicity of each layer and their spatial misalignment. Here, we study moiré patterns in objective structures with symmetries different from those in conventional patterns such as twisted bilayer graphene. Specifically, the mathematical descriptions for ring patterns, 2D Bravais lattice patterns, and helical patterns are derived analytically as representative examples of objective moiré patterns, using an augmented Fourier approach.
Our findings reveal that the objective moiré patterns retain the symmetries of their original structures but with different parameters. In addition, we present a non-objective case, conformal moiré patterns, to demonstrate the versatility of this approach. We hope this geometric framework will provide insights for solving more complex moiré patterns and facilitate the application of moiré patterns in X-ray diffractions, wave manipulations, molecular dynamics, and  other fields. 
\tableofcontents

\section{Introduction}
Moiré patterns \cite{oster1963moire} have gained significant attention since the discovery of unconventional superconductivity in twisted bilayer graphene (TBG) \cite{cao2018}. 
The term moiré originates from the French word for “watered silk”, referring to two layers of fabric pressed together to produce interference patterns.
In the scientific community, these patterns are typically formed by superposing two layers of 2D materials with a small misalignment. For example, the moiré pattern of TBG is constructed by stacking two layers of graphene that differ by a
`magic angle' $\sim 1.1^\circ$.
Moiré patterns have been extensively studied in many fields, including the mechanical behavior of stacked bilayer graphenes \cite{zhang2018structural,dai2020mechanics}, inverse design of moiré images \cite{design_moire}, superconductivity \cite{yankowitz2019tuning}, etc. Among these studies, the mathematical description of the geometry of moiré patterns plays a foundational role. While 2D moiré patterns with translational symmetry have been thoroughly explored, mathematical descriptions for patterns with other types of symmetry remain limited.  
In this paper, we focus on the mathematical formulation for moiré patterns within the context of objective structures possessing different symmetries. 

\begin{figure}[!h]
    \centering \includegraphics[width=1.0\textwidth]{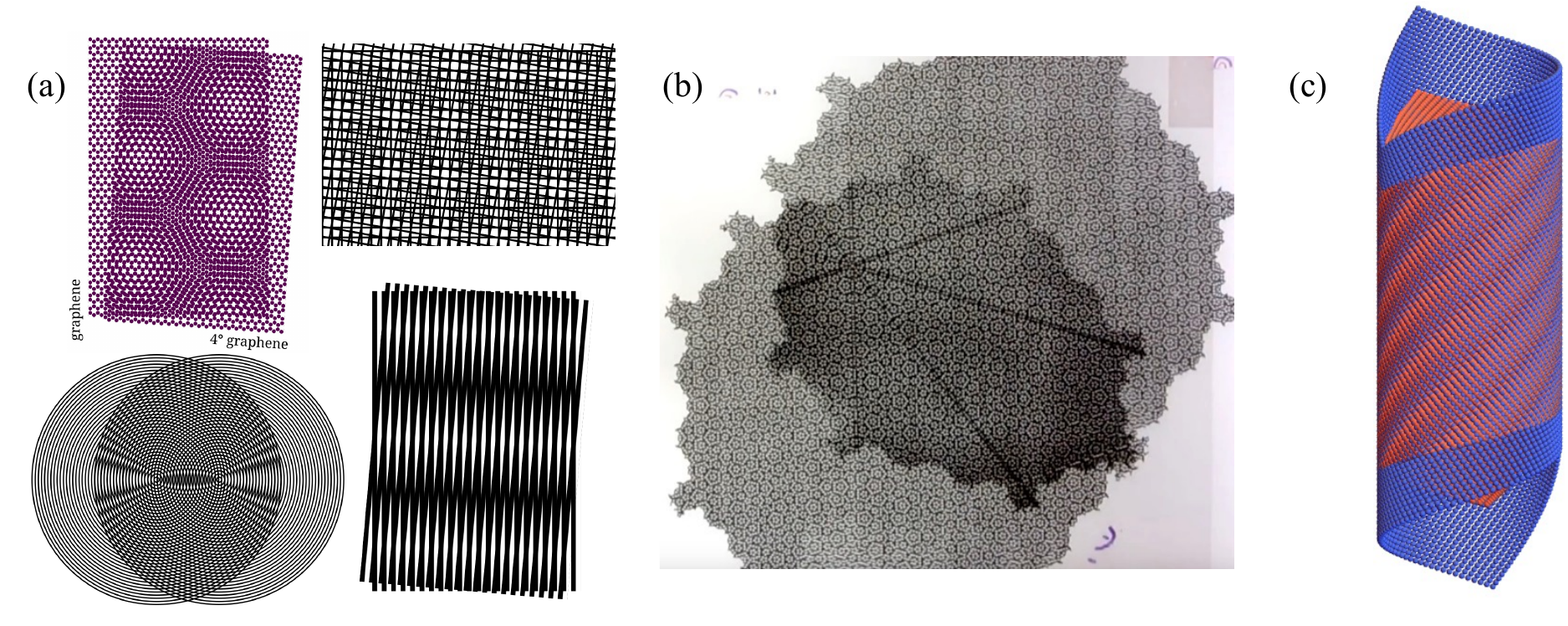}
    \caption{(a) From upper left to lower right: moiré patterns created by twisting bilayer graphenes, superposing two square grids, superposing two sets of concentric circles, and superposing two sets of parallel lines. (b) Lines emerge in the superposition of two Penrose tilings. Picture adapted from \cite{penrose}. (c) Helical moiré pattern in this paper.}
    \label{fig:intro}
\end{figure}
Conventional moiré patterns are created by overlaying two 2D lattice structures on a plane. These 2D structures can include graphenes, Bravais lattices, circular lattices, or simple sets of lines/curves (Fig.~\ref{fig:intro}(a)). 
Beyond 2D symmetric lattices, however, many lattices, such as quasi-crystals, have no translational or rotational symmetry. A well-known example is the aperiodic Penrose tiling discovered by
Sir Roger Penrose around 1975. Recently, he showed in his talk that carefully superposing two identical Penrose tilings produces `Incredible moiré Patterns' such as lines (Fig.~\ref{fig:intro}(b)) and nets \cite{penrose}. 
Mathematically, there are mainly two approaches to computing the geometry of moiré patterns: (1) counting the intersections of curves when the original patterns consist of lines/curves \cite{oster1964theoretical,rogers1977geometrical}, and (2) superposing two intensity functions maximized at corresponding points (the Fourier approach) when the original patterns consist of lattice points \cite{design_moire}. In this paper, we will still focus on the symmetric case but extend it by exploring moiré patterns with different symmetries in 3D. Here we use the Fourier approach to compute the geometry of moiré patterns for objective structures possessing various symmetries.

The concept of objective structures, first introduced by Richard James in 2006 \cite{james2006objective}, is a general way of describing lattice structures with various symmetries. In his heuristic definition, \emph{an objective atomic structure is a collection of atoms in which each atom sees the same environment}, up to orthogonal transformations (rotations and mirror reflections). Thus, many structures in nature, such as graphenes, C60, tails of bacteriophage T4 and black phosphorus, are objective structures. A fundamental theorem in this framework is that every objective structure (OS) can be generated by taking the orbit of a discrete isometry group on a finite set of lattice points in 3D \cite{james2006objective}. Thus this theorem allows us to study OS by analyzing the corresponding isometry groups, harnessing the multiplication, inverse and other rules of groups.
On the computational side, a key development within the OS framework is objective molecular dynamics (OMD), introduced by Dumitrică and James \cite{dumitricua2007objective}. By leveraging group operations and symmetry, OMD significantly reduces the dimensionality of Newtonian dynamics compared to conventional molecular dynamics (MD), making it more efficient for simulating graphene bending \cite{nikiforov2014tight, nikiforov2010wavelike}, dislocations \cite{zhang2009dislocation,pahlani2023objective0}, and fluid mechanics \cite{dayal2012design,pahlani2023objective,pahlani2023constitutive}.
Additionally, Banerjee extended the OS framework to density functional theory (DFT) \cite{banerjee2015spectral,banerjee2021ab}, which has been applied to first-principles simulations of bending in nanostructures \cite{banerjee2016cyclic}. In multiscale modeling, the OS framework has led to the development of a quasicontinuum approach \cite{hakobyan2012objective} and a multiscale hierarchy framework \cite{james2024kinetic}.
On the mathematical side, the OS framework has been applied in studying the interaction between partial differential equations (PDEs), group operations, and objective structures. For example, the coupling of Maxwell’s equations with helical structures has led to the discovery of helical Bragg diffraction, which enables precise measurements of lattice parameters \cite{friesecke2016twisted,justel2016bragg}. Furthermore, the Boltzmann equation, when incorporating translational symmetry, exhibits a reduced dimensionality, facilitating the study of long-time asymptotic nonequilibrium behaviors \cite{james2019long,james2020long,james2019self}.
On the physical and engineering side, the OS framework has been applied to helical phase transformations \cite{feng2019phase,ganor2016zig}, the design of helical origami \cite{feng2020helical}, origami structures with curved tiles \cite{liu2024design}, artificial blood vessels \cite{velvaluri2021origami}, and other engineering applications.
Moreover, the concept of OS has been further extended to “locally objective structures”, in which atoms experience the same local environment within a finite neighborhood \cite{rieger2022locally}.

In this paper, we explore another theme—moiré patterns—within the framework of objective structures. Inspired by classic planar moiré patterns in 2D Bravais lattices, 
we employ an augmented Fourier approach to calculate the geometry of objective moiré patterns. The basic idea is to construct two wave functions with peaks located at lattice points, such that the moiré pattern emerges as the `wave packet' of the two superposed waves. Three examples (ring structures, 2D Bravais lattice structures and helical structures) of objective moiré patterns are theoretically explored, and the analytical geometric structures of moiré patterns are provided.
We notice that the  moiré patterns and the effective 1D heterostructures for multi-layer carbon nanotubes have been studied \cite{zhao2022interlayer,koshino2015incommensurate}, which is a special case of our general helical patterns.
Additionally, in our paper a cone-like conformal moiré pattern (which is not objective) generated by conformal groups is computed analytically to indicate the versatility of the group orbit approach. Our results show that the objective moiré patterns and the conformal patterns retain the same symmetries of the original structures but with different parameters.
This is not the end of the story. Notably, all the examples presented here can be mapped isometrically onto a plane, making the wave functions straightforward to construct.
We can thus solve the moiré pattern in 2D, and then map it back to the objective structure framework. However, calculating moiré patterns for general objective structures that cannot be mapped to a plane, such as Buckyballs, remains unclear. The main challenge here is to define a generalized wave function respecting the symmetry of OS. Although helical waves have been found and applied in helical structures \cite{banerjee2015spectral, justel2016bragg}, the problem for general cases is yet to be solved.

This paper is organized as follows. In Section 2, we briefly introduce the concepts of objective structures and isometry groups. Then we will discuss the mathematical description of the objective moiré pattern created by rings, 2D lattices, and helical structures in Section 3. In Section 4, we showcase an example of conformal moiré pattern. Finally, Section 5  concludes the main points of the paper and envisions the possibility of solving the general problem.

\section{Preliminaries on objective structures}
%\subsection{Preliminaries on isometry groups and objective structures}

The most common periodic atomic structures are Bravais lattices, which are constructed by translating three linearly independent primitive lattice vectors in 3D, i.e. $\{n_1\bfa_1 + n_2 \bfa_2 + n_3 \bfa_3: n_i \in \mathbb{Z}\}$. Based on their symmetry groups, Bravais lattices can be classified into 14 lattice systems. However, not all periodic structures are translational invariant. For example, structures such as Fullerenes, the tail of bacteriophage T4 and the carbon nanotubes indeed exhibit symmetries distinct from those in Bravais lattices. Objective structures serve as a generalized concept that encompasses periodic atomic structures with a variety of symmetries. 

As first proposed by James, an objective (atomic) structure is a group of atoms $\calS = \{\bfx_1,\bfx_2,\dots, \bfx_N\}$ in which `each atom sees the same environment' up to orthogonal transformations (rotations and mirror reflections) \cite{james2006objective}. For instance, each atom in Bravais lattices sees the same environment due to translational symmetry, making Bravais lattices examples of objective structures. This intuitive interpretation also applies to other types of periodic structures. A more precise mathematical definition is provided below.
\begin{definition}
    $\calS = \{\bfx_1,\bfx_2,\dots, \bfx_N\}$ is an objective atomic structure if there are orthogonal transformations $\{\bfQ_1, \bfQ_2,\dots,\bfQ_N\}$ such that 
    $\calS = \{\bfx_i + \bfQ_i (\bfx_k - \bfx_1): k=1,2,\dots,N\}$.
\end{definition}
In this definition, $\bfQ_i \in O(3)$ is an orthogonal transformation satisfying $\bfQ \bfQ^{\text{T}} = \bfI$ and $N$ can be infinity. Physically, $\bfQ_i$ can be treated as the reorientation of the observer at $\bfx_i$ to achieve the same environment (i.e., the vectors $\bfx_k - \bfx_1$) as $\bfx_1$. It is natural to generalize the definition of objective atomic structures to {\it objective molecular structures} by replacing $\bfx_i$ with a group of atoms $\bfx_{i,j},j=1,2,\dots,M$. The symmetries of OS can also be generalized. Inspired by the Bravais lattice and its corresponding translational group, the symmetry of OS can be described by {\it isometry groups} defined as follows.

\begin{definition}
An isometry $g=(\bfQ|\bfc), \bfQ\in O(3), \bfc \in \mathbb{R}^3$ is an operator that maps a reference object to another and preserves the local distance between points. An isometry group $\calG=\{g_1, g_2, g_3 \dots \}$ is a collection of isometries satisfying the group axioms: closeness, associativity, existence of identity element and inverse element. The orbit of $\calG$ on $\bfx\in\mathbb{R}^3$ is defined as $\calG(\bfx) = \{g_1(\bfx), g_2(\bfx), g_3(\bfx) \dots \}$. $\calG$ is said to be a discrete isometry group if $\calG(\bfx)$ has no accumulating points for any $\bfx$ (omitting repeated atoms).
\end{definition}
In this definition, the action of an isometry on a point $\bfx \in \mathbb{R}^3$ is defined by $g(\bfx) = \bfQ \bfx + \bfc$. Under this action, one may naturally have the multiplication, identity and inverse for the isometry $g_i=(\bfQ_i|\bfc_i)$ as:
\beqs
&&g_1 g_2(\bfx) := g_1(g_2(\bfx)) \implies g_1 g_2 = (\bfQ_1 \bfQ_2 | \bfQ_1 \bfc_2 + \bfc_1), \nonumber \\
&&id = (\bfI|{\bf 0}), \nonumber \\
&& g^{-1} = (\bfQ^{\mathrm{T}}|-\bfQ^{\mathrm{T}} \bfc).
\eeqs
The associativity holds under the multiplication rule, and the closeness is a prior assumption. Since $\bfQ \in O(3)$ (i.e., $\bfQ \bfQ^{\T} = \bfI$), local distances are preserved by $|\bfx_1 - \bfx_2| = |g(\bfx_1) - g(\bfx_2)|=|\bfQ\bfx_1 - \bfQ\bfx_2|$. Without providing the proof here (see \cite{james2006objective}), a key theorem relating objective structures and their corresponding isometry groups is as follows.
\begin{theorem}
If one allows for intersecting images, every objective structure is the
 orbit of a discrete group of isometries on a finite set of points in $\mathbb{R}^3$.
\end{theorem}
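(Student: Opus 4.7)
The plan is to read off a group of isometries directly from the data $(\bfx_i, \bfQ_i)$ supplied by Definition 2.1, and then verify the group-theoretic properties using the composition rules stated in the preliminaries. Fixing $\bfQ_1 = \bfI$, I would define
\[
g_i := (\bfQ_i \,|\, \bfx_i - \bfQ_i \bfx_1), \qquad i = 1, 2, \ldots, N,
\]
so that $g_1 = \text{id}$ and, by construction, $g_i(\bfx_1) = \bfx_i$. Applying $g_i$ to a general atom $\bfx_k$ yields $g_i(\bfx_k) = \bfx_i + \bfQ_i(\bfx_k - \bfx_1)$, which lies in $\calS$ by Definition 2.1. Hence each generator sends $\calS$ into $\calS$, and in fact onto $\calS$ because the definition asserts set equality, not just inclusion.

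I would then take $\calG$ to be the subgroup of the Euclidean isometry group generated by $\{g_1,\ldots,g_N\}$ under the composition, identity, and inverse rules recorded above. Because each generator acts bijectively on $\calS$, so does every $g \in \calG$; hence $\calG(\bfx_1) \subseteq \calS$, while the reverse inclusion $\calS \subseteq \calG(\bfx_1)$ is immediate from $\bfx_i = g_i(\bfx_1)$. This gives $\calG(\bfx_1) = \calS$, and the finite base set required by the theorem is simply the singleton $\{\bfx_1\}$ (or, for objective molecular structures, the motif $\{\bfx_{1,j}\}_{j=1}^M$). For discreteness I would pick three non-collinear points in $\calS$ and observe that the evaluation map $g \mapsto (g(\bfy_1),g(\bfy_2),g(\bfy_3))$ embeds $\calG$ into $\calS^3$ up to a two-fold reflection ambiguity; since $\calS$ has no accumulation points, neither does $\calG$, so $\calG$ is discrete as an isometry subgroup.

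The delicate point, and the raison d'être for the clause \emph{allowing intersecting images}, is that Definition 2.1 specifies the $\bfQ_i$ pointwise and does not in general enforce the compatibility $\bfQ_i \bfQ_j = \bfQ_m$ whenever $\bfx_i + \bfQ_i(\bfx_j - \bfx_1) = \bfx_m$. Consequently $g_i g_j$ and $g_m$ agree at $\bfx_1$ but may differ by a nontrivial isometry fixing $\bfx_m$, and the generated group $\calG$ typically contains point-stabilizers beyond the original list $\{g_1,\ldots,g_N\}$. Such extra elements produce distinct group elements sending $\bfx_1$ to the same atom, which is exactly the "intersecting images" phenomenon. Verifying that these additional elements still preserve $\calS$---so that the bijectivity and discreteness arguments above remain valid---is the principal technical hurdle, and is carried out by iterating the same-environment property of Definition 2.1. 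A secondary wrinkle is the infinite-$N$ case, in which the surjectivity-implies-bijectivity step on $\calS$ must be upgraded to a properness or ball-exhaustion argument; this complicates the bookkeeping but does not alter the overall structure of the proof.
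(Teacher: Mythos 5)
The paper itself gives no proof of this theorem---it explicitly defers to the cited reference \cite{james2006objective}---so there is no in-text argument to compare against; your construction $g_i=(\bfQ_i\,|\,\bfx_i-\bfQ_i\bfx_1)$, the passage to the generated group, and the orbit identification $\calG(\bfx_1)=\calS$ reproduce the standard argument from that reference and are essentially correct. One remark: the ``principal technical hurdle'' you flag at the end is not actually a hurdle, because your own bijectivity paragraph already disposes of it---each generator maps $\calS$ onto $\calS$ and isometries of $\R^3$ are injective, so every word in the generators and their inverses preserves $\calS$ with no further iteration of the same-environment property; the only genuine content of the ``intersecting images'' clause is that distinct elements of $\calG$ may send $\bfx_1$ to the same atom, so the orbit covers $\calS$ with multiplicity. (Your discreteness step also tacitly assumes that $\calS$ itself has no accumulation points and is not collinear; both are implicit in the notion of an atomic structure but are not stated in Definition 2.1.)
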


This theorem unveils the connection between symmetric structures and isometry groups, suggesting that we may generate objective structures by acting isometry groups on atoms in $\mathbb{R}^3$ and conveniently study the physical properties by examining the corresponding isometry groups. For instance, Bravis lattices are generated by translational groups $\{t_1^p t_2^q t_3^r: t_i=(\bfI|\bfe_i), p,q,r\in \mathbb{Z}\}$.
The main subject of the International Tables of Crystallography corresponds to isometry groups containing three linearly independent translations. A classification of isometry groups, which includes ring groups, helical groups \cite{feng2019phase} and others, can be found in a note \cite{isometry_group}. 

\section{Objective moiré pattern}
In this section, we will use the framework of OS and isometry groups to study the geometry of moiré patterns constructed by overlaying two layers of objective structures, including ring structures, 2D Bravais lattice structures, and helical structures. An approach for more general objective moiré patterns will also be discussed.

\subsection{Ring moiré pattern}
Let $\{\bfe_1,\bfe_2,\bfe_3\}$ be an orthonormal basis and $\bfz$ be a reference point in 3D. A ring lattice structure is a circular objective structure generated by a discrete ring group,
\beq
\calR = \{h^p: p \in \mathbb{Z}\},\quad h=(\bfR_{\theta}|(\bfI - \bfR_{\theta
})\bfz),
\eeq
and the corresponding ring lattice is given by
\beq
\calR({\bf x}) = \{h^p({\bf x}): p \in \mathbb{Z}\}
\eeq
with radius $r=|\bfx-\bfz|$ and $h^p({\bf x})=\bfR_{p\theta}(\bfx-\bfz)+\bfz$.
Here $\bfR_\theta$ is a rotation tensor about $\bfe_3$ of rotation angle $\theta = 2\pi/n$ with $n\in \mathbb{Z}$, $\bfz \cdot \bfe_3 = 0$. We then overlay two rings $\calR_a(\bfx) = \{h_a^p(\bfx): p\in \mathbb{Z} \}$ and $\calR_b(\bfx) = \{h_b^p(\bfx) : p \in \mathbb{Z}\}$  with slightly different generators $h_a$  and $h_b$ to form a ring-like moiré pattern as shown in Fig.~\ref{fig:ring}.

\begin{figure}[!h]
    \centering
    \includegraphics[width=\textwidth]{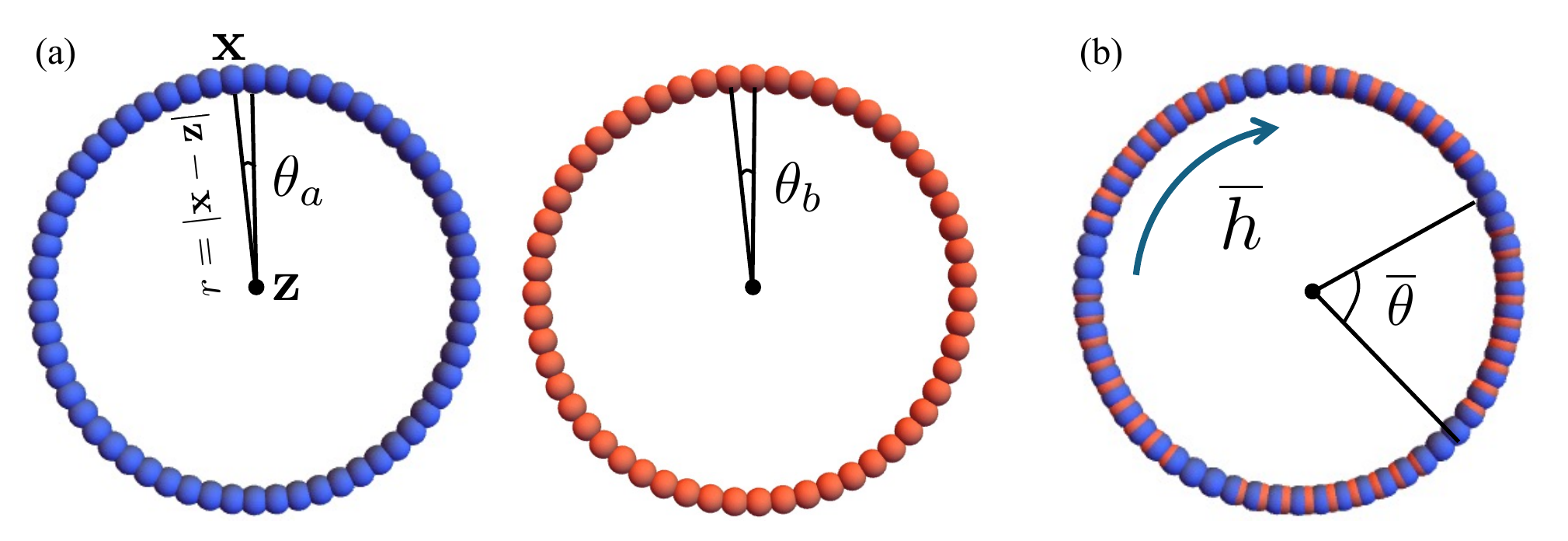}
    \caption{(a) Two ring structures with parameters $r=1, \theta_a=2\pi/60, \theta_b=2\pi/55$. (b) The formed moiré pattern with parameters $r=1, \bar{\theta}=2\pi/5$.}
    \label{fig:ring}
\end{figure}

To compute the geometry of the moiré pattern, it is convenient to map the rings to lines isometrically, and then study the moiré pattern on a line. To this end, the isometric map is simply chosen as
\beq
d(\calR(\bfx)) = |\bfx - \bfz| p \theta \bfe_1,
\eeq
where the 1D lattice vector is $|\bfx - \bfz| \theta \bfe_1$.
We then use the Fourier approach to analyze the moiré pattern by superposing two rings $\calR_a(\bfx)$ and $\calR_b(\bfx)$ with lattice vectors $\bfr_a = |\bfx-\bfz| \theta_a \bfe_1$ and $\bfr_b = |\bfx-\bfz| \theta_b \bfe_1$. To satisfy the periodicity and discreteness conditions, the parameters $\theta_a$ and $\theta_b$ are chosen as
$\theta_a = 2\pi/n_a$ and $\theta_b = 2\pi/n_b$, $n_{a,b} \in \mathbb{Z}\setminus\{0\}$. The corresponding reciprocal lattice vectors are $\hat{\bfr}_{a,b} = \bfe_1/(|\bfx - \bfz|\theta_{a,b})$, respectively. Here we use the notation $\hat{\bfr}$ to represent the reciprocal vector of $\bfr$, including in the 2D case in the following section. It is straightforward to follow the typical Fourier method to compute the moiré pattern by superposing the two wave functions ($i$ is the imaginary unit) for intensity
\beq
f_a(\bfr) = \sum_{j\in\mathbb{Z}} c_j \exp [2\pi i(j \hat{\bfr}_a \cdot \bfr)], \quad f_b(\bfr) = \sum_{j\in\mathbb{Z}} d_j \exp [2\pi i(j \hat{\bfr}_b \cdot \bfr)], 
\eeq
where the intensity reaches maximum at lattice points in each ring.
Then the intensity for the moiré pattern is the superposition
\beq
f_m(\bfr) = f_a (\bfr) + f_b(\bfr)  = \sum_{j\in\mathbb{Z}} c_j \exp[2\pi i (j \hat{\bfr}_a \cdot \bfr)] \left(1+ \frac{d_j}{c_j}\exp[2\pi i (j(\hat{\bfr}_b - \hat{\bfr}_a)\cdot \bfr)] \right),
\eeq
implying the periodicity of the moiré pattern is determined by the reciprocal lattice of $(\hat{\bfr}_b - \hat{\bfr}_a)$. By transforming the linear moiré pattern back to the ring pattern, a direct calculation gives the geometric structure for the moiré pattern as
\beq
\calR_m({\bf x}) = \{\bar{h}^p({\bf x}): p \in \mathbb{Z}\},\quad \bar{h}=(\bfR_{\bar{\theta}}|(\bfI - \bfR_{\bar{\theta}
})\bfz)  ~\text{with}~ \bar{\theta} = 2\pi/(n_a - n_b).
\eeq

 Figure \ref{fig:ring} shows an example of a ring-like moiré pattern constructed by superposing two ring structures with parameters $r=1, \theta_a=2\pi/60, \theta_b=2\pi/55$. Following the calculation of the moiré pattern, the moiré pattern is also a discrete ring, with periodicity $\bar{n} = n_a - n_b = 5$ verified in Fig.~\ref{fig:ring}(b). As a generalization, we prove a theorem about the discreteness and periodicity for helical structures in Section \ref{sec:helical}.

\subsection{2D Bravais moiré pattern}\label{sec:2D}
In this section, we construct 2D Bravais lattices with translation symmetry (which is a special case of objective structures) using isometry groups, and derive the moiré pattern using the Fourier approach by superposing two waves in 2D. The objective description of the moiré pattern will also be discussed.

\subsubsection{2D translational group and moiré pattern}
Within the framework of objective structures, a 2D Bravais lattice with translational symmetry is generated by a translation group.
Let $\bfe_1 = (1,0)$ and $\bfe_2 = (0,1)$ be the orthonormal basis of $\mathbb{R}^2$. A translation group is given by 
\beq
\calT = \{g_1^p g_2^q: (p,q)\in \mathbb{Z}^2 \},\quad g_i = (\bfI | \bft_i),
\eeq
where $g_1$ and $g_2$ are the group generators with $\bft_i = \xi_i \bfe_1 + \eta_i \bfe_2 \in \mathbb{R}^2$. To avoid the degenerate case, the translation parts $\bft_1$ and $\bft_2$ need to be linearly independent, i.e., $\bft_1 \cdot \bft_2^{\perp} \neq 0$.
It is also worth noting that the group $\calT$ is Abelian. To construct a 2D lattice, we apply the group $\calT$ on a point (without loss of generality, we choose ${\bf 0}$) in $\mathbb{R}^2$, namely, $\calT({\bf 0})=\{p \bft_1 + q \bft_2:(p,q)\in \mathbb{Z}^2\}$. 
The procedure introduced above is a typical example of describing an objective structure, equivalent to a 2D Bravais lattice with primitive vectors $\bft_1$ and $\bft_2$. 
Beyond Bravais lattices, the objective structure framework provides a more convenient and general approach for describing structures with complex symmetries.

Now we turn to the 2D moiré pattern. The collection of the lattice points for a 2D Bravais lattice is given by
\beq
\calT_a({\bf 0}) = \{p \bfr_1 + q \bfr_2 : (p,q) \in \mathbb{Z}^2, \bfr_1 \cdot \bfr_2^{\perp} \neq 0 \},
\eeq
where $\bfr_1$ and $\bfr_2$ are the primitive lattice vectors given by some $\xi \bfe_1 + \eta \bfe_2$.  Similarly, the collection of lattice points for the second Bravais lattice  is given by
\beq
\calT_b({\bf 0})  = \{ p \bfs_1 + q \bfs_2: (p,q) \in \mathbb{Z}^2, \bfs_1 \cdot \bfs_2^{\perp} \neq 0 \},
\eeq
where $\bfs_1$ and $\bfs_2$ are the primitive lattice vectors for the second layer. Let the corresponding  reciprocal lattice vectors be $\hat{\bfr}_1, \hat{\bfr}_2, \hat{\bfs}_1, \hat{\bfs}_2$. By definition, we have
\beq
\bfr_i \cdot \hat{\bfr}_j = \delta_{ij},~ \bfs_i \cdot \hat{\bfs}_j = \delta_{ij},~ i,j \in \{1, 2\}.
\eeq
 The bilayer structure exhibits a long-range periodicity, owing to the coupling between the generically distinct periodicities of the two monolayers. To describe the periodicity and the coupling, we impose two general wave functions for intensity
\beqs
f_a(\bfr) &=& \sum_{j,k\in\mathbb{Z}} c_{j,k}\exp[2\pi i (j\hat{\bfs}_1 + k \hat{\bfs}_2) \cdot \bfr], \nonumber \\
f_b(\bfr) &=& \sum_{j,k \in\mathbb{Z}} d_{j,k} \exp [2\pi i (j \hat{\bfr}_1 + k \hat{\bfr}_2) \cdot \bfr], \label{eq:functions}
\eeqs
where $i$ is the imaginary unit and the function magnitude reaches maximum at lattice points in 2D. Let $f_m(\bfr) = f_a(\bfr) + f_b(\bfr)$ be the superposition of these two functions. Substituting Eq.~(\ref{eq:functions}) yields 
\beq
f_m(\bfr) = \sum_{j,k \in\mathbb{Z}} c_{j,k} \exp[2\pi i (j \hat{\bfs}_1 + k \hat{\bfs}_2) \cdot \bfr] \left(1+ \frac{d_{k,j}}{c_{k,j}} \exp[2\pi i (j (\hat{\bfr}_1 - \hat{\bfs}_1) + k (\hat{\bfr}_2 - \hat{\bfs}_2)) \cdot \bfr ]\right). \label{eq:superposition}
\eeq
The small difference in lattice vectors ($(\bfs_1, \bfs_2) = (\bfr_1, \bfr_2)$) yields small difference in the corresponding reciprocal vectors ($(\hat{\bfs}_1, \hat{\bfs}_2) \approx (\hat{\bfr}_1, \hat{\bfr}_2)$). As shown in Eq.~\ref{eq:superposition}, the periodicity of $f_m(\bfr)$ is determined by two terms: the short-range intrinsic periodicity of the substrate determined by $(\hat{\bfs}_1, \hat{\bfs}_2)$ and the modulated long-range periodicity determined by $(\hat{\bfr}_1 - \hat{\bfs}_1, \hat{\bfr}_2 - \hat{\bfs}_2)$. The second term is referred to as {\it moiré vector}, which are given by the reciprocal vectors $(\bfm_1,\bfm_2)=(\widehat{\hat{\bfr}_1-\hat{\bfs}_1},\widehat{\hat{\bfr}_2 - \hat{\bfs}_2})$. The moiré pattern of twisted bilayer graphenes is a case of this type.

\subsubsection{Example}
Figure \ref{fig:2D} shows an example of 2D moiré pattern constructed by two layers of 2D lattices with primitive lattice vectors $\bfr_1=(0.1,0), \bfr_2=(0.0078,0.0097)$ and $\bfs_1=(0.09,0), \bfs_2=(0,0.11)$.  Following the above Fourier approach, we provide the lattice vectors for the moiré pattern as
\beqs
&&\bfm_1 =\widehat{\hat{\bfr}_1-\hat{\bfs}_1}=(0.9,0), \nonumber \\
&&\bfm_2= \widehat{\hat{\bfr}_2 - \hat{\bfs}_2} = (-0.754,1.064),
\eeqs
where the hat $\hat{\bff}_i$ denotes the reciprocal vector corresponding to the basis $(\bff_1,\bff_2)$ in 2D. 

\begin{figure}[!h]
    \centering
    \includegraphics[width=\textwidth]{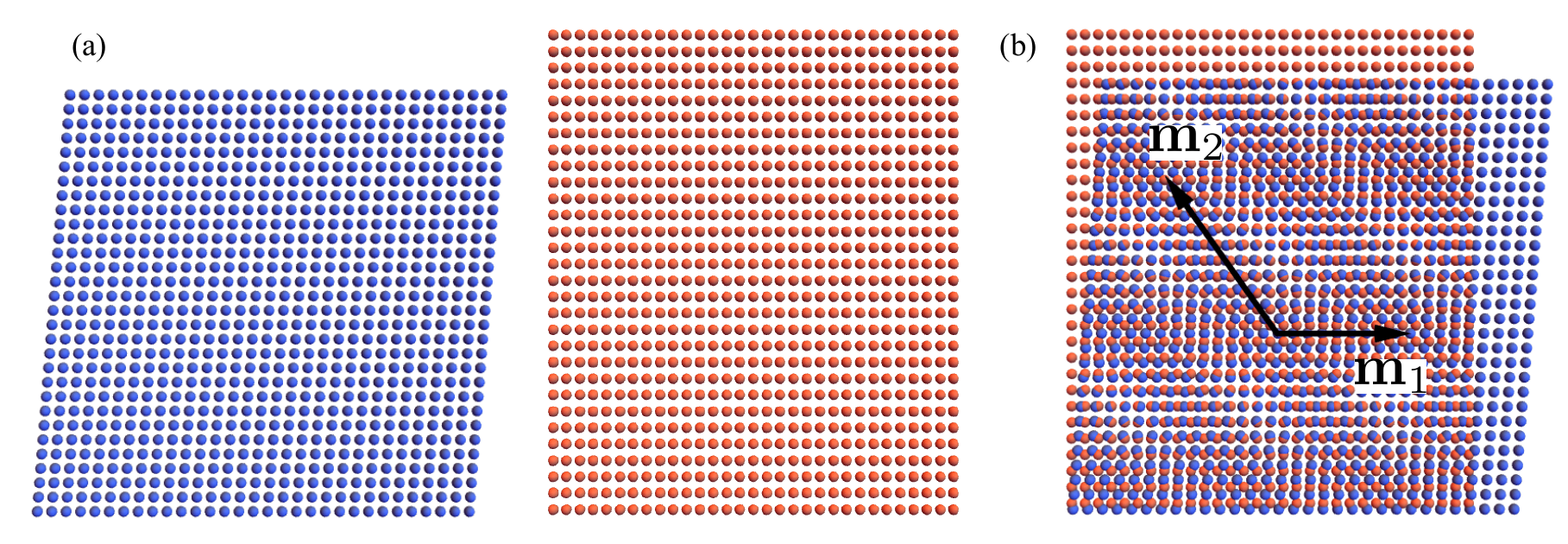
    }
    \caption{(a) Two 2D Bravais lattice structures with lattice vectors $\bfr_1=(0.1,0), \bfr_2=(0.0078,0.0097)$ and $\bfs_1=(0.09,0), \bfs_2=(0,0.11)$. (b) The formed moiré pattern is also a 2D lattice, with lattice vectors $\bfm_1=(0.9,0), \bfm_2=(-0.754,1.064)$.}
    \label{fig:2D}
\end{figure}

\subsection{Helical moiré pattern} \label{sec:helical}
\subsubsection{Helical structures}
A helical structure $\calH(\bfx)$ is generated by applying a helical group $\calH=\left\{g_{1}^{p} g_{2}^{q} :(p, q) \in \mathbb{Z}^{2}\right\}$ on a point $\mathbf{x} \in \mathbb{R}^{3}$, i.e., $\calH(\bfx)=\left\{g_{1}^{p} g_{2}^{q}(\bfx) :(p, q) \in \mathbb{Z}^{2}\right\}$. By reparameterizing helical groups appropriately \cite{feng2019phase}, the group generators $g_{1}$ and $g_{2}$ have the following forms:
\begin{align}
& g_{1}=\left(\mathbf{R}_{\theta_{1}} \mid\left(\mathbf{I}-\mathbf{R}_{\theta_{1}}\right) \mathbf{z}+\tau_{1} \mathbf{e}\right) \nonumber\\
& g_{2}=\left(\mathbf{R}_{\theta_{2}} \mid\left(\mathbf{I}-\mathbf{R}_{\theta_{2}}\right) \mathbf{z}+\tau_{2} \mathbf{e}\right) 
\end{align}
where $\mathbf{R}_{(.)} \in \mathrm{SO}(3), \mathbf{z}, \mathbf{e} \in \mathbb{R}^{3}, \theta_{i}, \tau_{i} \in \mathbb{R}$. 
Here $\bfz$ is the reference point and $\bfe$ is the axis of the helical structure.  In the generator $g_i$, the parameter $\theta_i$ is the rotation term about $\bfe$ and $\tau_i$ is the translation along $\bfe$. In this parameterization, $g_i$ maps the starting atom $\bfx$ to its nearest neighbors $g_1(\bfx)$ and $g_2(\bfx)$, as shown in Fig.~\ref{fig:helical_structure}(a).
To avoid degeneracies (lines, points, rings) and accumulation points, the parameters in $g_{1}$ and $g_{2}$ need to satisfy the following non-degeneracy and discreteness conditions:
\begin{align}
& \mathbf{x} \neq \mathbf{z}+\lambda \mathbf{e}, \quad \lambda \in \mathbb{R},  \nonumber \\
& p^{\star} \theta_{1}+q^{\star} \theta_{2}=2 \pi, \nonumber \\
& p^{\star} \tau_{1}+q^{\star} \tau_{2}=0 
\label{eq:discreteness}
\end{align}
for some $\left(p^{\star}, q^{\star}\right) \in \mathbb{Z}^{2}\setminus(0,0)$ and $\theta_{1} \tau_{2} \neq \theta_{2} \tau_{1}$. These conditions basically guarantee that the generated helical structure is closed and discrete (rotations sum to $2\pi$ and translations sum to zero) as shown in Fig.~\ref{fig:helical_structure}(b), and the structure is not a line (e.g. $\theta_1 \tau_2 = \theta_2 \tau_1$ will make $g_1^p(\bfx)$ and $g_2^q(\bfx)$ on a single helix).
Then, a helical structure is the collection of atomic points given by
\begin{equation}
\calH(\mathbf{x})=\left\{\mathbf{y}_{\mathbf{x}}(p, q)\left|\mathbf{y}_{\mathbf{x}}(p, q):=\mathbf{R}_{p \theta_{1}+q \theta_{2}}(\mathbf{x}-\mathbf{z})+\left(p \tau_{1}+q \tau_{2}\right) \mathbf{e}+\mathbf{z}, \quad p \in \mathbb{Z}, q=1, \ldots,\right| q^{\star} \mid\right\} 
\end{equation}
\begin{figure}
    \centering
    \includegraphics[width=0.7\linewidth]{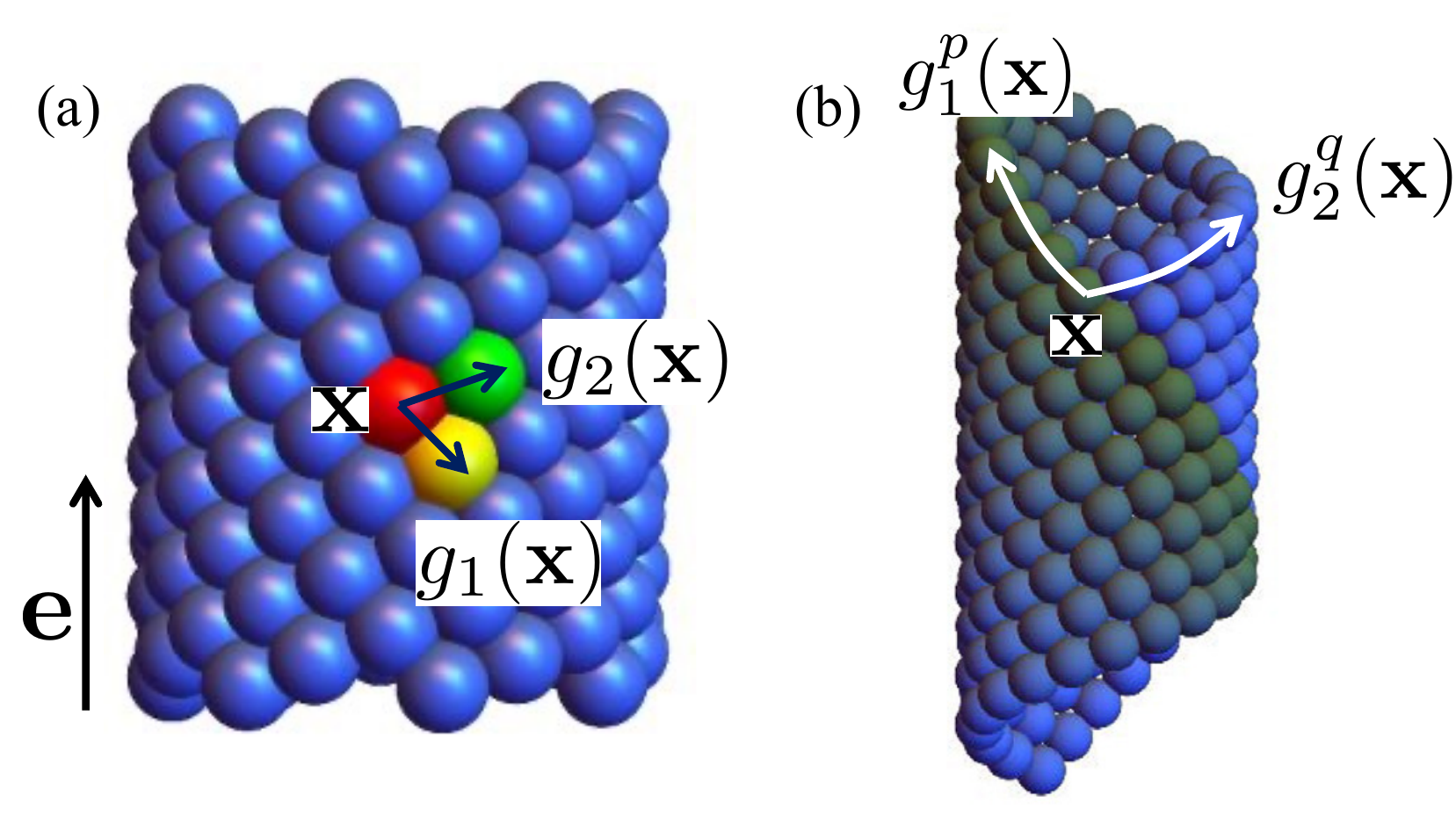}
    \caption{(a) Notations for helical structures. (b) Non-degeneracy and discreteness conditions.}
    \label{fig:helical_structure}
\end{figure}

We construct an isometric function mapping a helical structure $\calH(\mathbf{x})$ to a corresponding Bravais lattice as follows:
\begin{equation}
d(\calH(\mathbf{x})):=r \tilde{\mathbf{e}}_{1}+((\mathbf{x}-\mathbf{z}) \cdot \mathbf{e}) \mathbf{e}+\left(p \tau_{1}+q \tau_{2}\right) \mathbf{e}+\left(p r \theta_{1}+q r \theta_{2}\right) \tilde{\mathbf{e}}_{1}, \quad p \in \mathbb{Z}, q=1, \ldots,\left|q^{\star}\right|
\end{equation}
where $\tilde{\mathbf{e}}_{1}=(\mathbf{I}-\mathbf{e} \otimes \mathbf{e})(\mathbf{x}-\mathbf{z})/|(\mathbf{I}-\mathbf{e} \otimes \mathbf{e})(\mathbf{x}-\mathbf{z})|$ and $r=|(\bfI - \bfe \otimes \bfe)(\bfx - \bfz)|$ is the radius. The function $\tilde{f}$ is an isometric deformation that `unrolls' a helical structure into a 2D Bravais lattice. The corresponding 2D lattice vectors are therefore $\mathbf{h}_{1}=\left(r \theta_{1}, \tau_{1}\right)$ and $\mathbf{h}_{2}=\left(r \theta_{2}, \tau_{2}\right)$ in the basis $\left\{\tilde{\mathbf{e}}_{1}, \mathbf{e}\right\}$.  The periodicity of the bilayer helical structure aligns with that of the unrolled bilayer 2D Bravais lattices.
However, the helical structure also requires additional constraints--the discreteness condition (\ref{eq:discreteness}). Therefore, the strategy for studying the periodicity of bilayer helical structures (helical moiré pattern) is to analyze the corresponding bilayer 2D Bravais lattices with these added discreteness constraints.

\subsubsection{Helical moiré  pattern}
Before analyzing the periodicity of helical moiré  pattern, we first derive a useful theorem regarding the general property of reciprocal vectors.

\begin{theorem}\label{thm}
Suppose $p_{a}^{\star} \mathbf{r}_{1}+q_{a}^{\star} \mathbf{r}_{2}=(\lambda, 0)$ and $p_{b}^{\star} \mathbf{s}_{1}+q_{b}^{\star} \mathbf{s}_{2}=(\lambda, 0)$ for some $\left(p_{a}^{\star}, q_{a}^{\star}\right) \neq\left(p_{b}^{\star}, q_{b}^{\star}\right) \in$ $\mathbb{Z}^{2} \backslash\{\mathbf{0}\}, \lambda \in \mathbb{R} \backslash\{0\}, \mathbf{r}_{1,2}, \mathbf{s}_{1,2} \in \mathbb{R}^{2}$ subject to nondegenerate conditions $\mathbf{r}_{1} \cdot \mathbf{r}_{2}^{\perp} \neq 0, \mathbf{s}_{1} \cdot \mathbf{s}_{2}^{\perp} \neq 0$. Let $\left(\hat{\mathbf{r}}_{1}, \hat{\mathbf{r}}_{2}\right)$, $\left(\hat{\mathbf{s}}_{1}, \hat{\mathbf{s}}_{2}\right)$ be the corresponding reciprocal vectors of $\left(\mathbf{r}_{1}, \mathbf{r}_{2}\right)$ and $\left(\mathbf{s}_{1}, \mathbf{s}_{2}\right)$, respectively. Let $\hat{\mathbf{t}}_{1}=\hat{\mathbf{s}}_{1}-\hat{\mathbf{r}}_{1}$ and $\hat{\mathbf{t}}_{2}=\hat{\mathbf{s}}_{2}-\hat{\mathbf{r}}_{2}$ be the difference of reciprocal vectors.  If $\left(\mathbf{t}_{1}, \mathbf{t}_{2}\right)$ is the reciprocal vectors with respect to $\left(\hat{\mathbf{t}}_{1}, \hat{\mathbf{t}}_{2}\right)$ under $\hat{\mathbf{t}}_{1} \cdot \hat{\mathbf{t}}_{2}^{\perp} \neq 0$, then
\begin{equation}
\left(p_{b}^{\star}-p_{a}^{\star}\right) \mathbf{t}_{1}+\left(q_{b}^{\star}-q_{a}^{\star}\right) \mathbf{t}_{2}=(\lambda, 0). \label{eq:thm1}
\end{equation}
\end{theorem}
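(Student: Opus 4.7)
The plan is to exploit the standard duality between a basis and its reciprocal basis in $\R^2$: for any basis $(\bfb_1,\bfb_2)$ with reciprocal $(\hat{\bfb}_1,\hat{\bfb}_2)$ satisfying $\bfb_i\cdot\hat{\bfb}_j=\delta_{ij}$, every vector $\bfa\in\R^2$ admits the unique expansion $\bfa=(\bfa\cdot\hat{\bfb}_1)\,\bfb_1+(\bfa\cdot\hat{\bfb}_2)\,\bfb_2$. I will apply this identity twice: first to extract the integer coefficients from the hypotheses, and then in the reverse direction to reconstruct the expansion of $(\lambda,0)$ in the basis $(\bft_1,\bft_2)$.

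Writing $(\lambda,0)=\lambda\bfe_1$, the given identities $p_a^\star\bfr_1+q_a^\star\bfr_2=\lambda\bfe_1$ and $p_b^\star\bfs_1+q_b^\star\bfs_2=\lambda\bfe_1$, combined with the uniqueness of basis coordinates, immediately yield
\[
 p_a^\star=\lambda(\bfe_1\cdot\hat{\bfr}_1),\quad q_a^\star=\lambda(\bfe_1\cdot\hat{\bfr}_2),\quad p_b^\star=\lambda(\bfe_1\cdot\hat{\bfs}_1),\quad q_b^\star=\lambda(\bfe_1\cdot\hat{\bfs}_2).
\]
Taking differences and using the definition $\hat{\bft}_i=\hat{\bfs}_i-\hat{\bfr}_i$ gives $p_b^\star-p_a^\star=\lambda(\bfe_1\cdot\hat{\bft}_1)$ and $q_b^\star-q_a^\star=\lambda(\bfe_1\cdot\hat{\bft}_2)$.

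To conclude, I invoke the same duality once more, now with respect to the basis $(\bft_1,\bft_2)$, which is well defined and spans $\R^2$ thanks to the hypothesis $\hat{\bft}_1\cdot\hat{\bft}_2^\perp\neq 0$. Expanding $\lambda\bfe_1$ in this basis produces precisely the coefficients $\lambda(\bfe_1\cdot\hat{\bft}_1)$ and $\lambda(\bfe_1\cdot\hat{\bft}_2)$, which by the previous step equal $p_b^\star-p_a^\star$ and $q_b^\star-q_a^\star$, respectively. Rearranging yields identity (\ref{eq:thm1}).

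I do not anticipate any substantive obstacle: the statement is essentially a bookkeeping identity about dual bases, with no genuinely geometric or group-theoretic content beyond the reciprocal-vector relations. The only point that warrants attention is ensuring that each of the pairs $(\hat{\bfr}_1,\hat{\bfr}_2)$, $(\hat{\bfs}_1,\hat{\bfs}_2)$, and $(\bft_1,\bft_2)$ genuinely forms a basis of $\R^2$, which is exactly what the three non-degeneracy hypotheses $\bfr_1\cdot\bfr_2^\perp\neq 0$, $\bfs_1\cdot\bfs_2^\perp\neq 0$, and $\hat{\bft}_1\cdot\hat{\bft}_2^\perp\neq 0$ guarantee.
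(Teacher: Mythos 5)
Your proof is correct, and it takes a genuinely different route from the paper's. The paper proves the identity by brute force: it writes down the explicit closed-form expressions for the reciprocal vectors $\hat{\bfu}_1,\hat{\bfu}_2$ in terms of $\bfu_1,\bfu_2$, substitutes $\bfr_2=\tfrac{1}{q_a^\star}\bigl((\lambda,0)-p_a^\star\bfr_1\bigr)$ and $\bfs_2=\tfrac{1}{q_b^\star}\bigl((\lambda,0)-p_b^\star\bfs_1\bigr)$ (with a case split over which of $q_a^\star,q_b^\star$ is nonzero), computes $\hat{\bft}_1,\hat{\bft}_2$ and then $\bft_1,\bft_2$ symbolically, and verifies Eq.~(\ref{eq:thm1}) by direct calculation. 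You instead observe that the hypotheses say exactly that the coordinates of $\lambda\bfe_1$ in the basis $(\bfr_1,\bfr_2)$ are $(p_a^\star,q_a^\star)$ and in $(\bfs_1,\bfs_2)$ are $(p_b^\star,q_b^\star)$; reading these coordinates off via $p_a^\star=\lambda\,\bfe_1\cdot\hat{\bfr}_1$, etc., the linearity of the dot product turns the difference of reciprocal bases directly into the difference of coordinates, and one final application of the expansion identity in the basis $(\bft_1,\bft_2)$ (using that reciprocity is a symmetric relation, so $\bft_i\cdot\hat{\bft}_j=\delta_{ij}$) gives the result. Your argument is coordinate-free, needs no case analysis, never touches the explicit reciprocal-vector formulas, and in fact works verbatim with $(\lambda,0)$ replaced by an arbitrary nonzero vector and in any dimension; the paper's computation, by contrast, produces along the way the explicit component formulas that are reused later in Eq.~(\ref{eq:m1m2}), which is the one thing the direct calculation buys that your argument does not.
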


\begin{proof}
For two general vectors $\mathbf{u}_{1}, \mathbf{u}_{2} \in \mathbb{R}^{2},\left|\mathbf{u}_{1}\right| \neq 0,\left|\mathbf{u}_{2}\right| \neq 0$, using the standard procedure, the corresponding reciprocal vectors are
\begin{align}
& \hat{\mathbf{u}}_{1}=\frac{\left|\mathbf{u}_{2}\right|^{2}}{\left|\mathbf{u}_{1}\right|^{2}\left|\mathbf{u}_{2}\right|^{2}-\left(\mathbf{u}_{1} \cdot \mathbf{u}_{2}\right)^{2}} \mathbf{u}_{1}-\frac{\mathbf{u}_{1} \cdot \mathbf{u}_{2}}{\left|\mathbf{u}_{1}\right|^{2}\left|\mathbf{u}_{2}\right|^{2}-\left(\mathbf{u}_{1} \cdot \mathbf{u}_{2}\right)^{2}} \mathbf{u}_{2}, \nonumber \\
& \hat{\mathbf{u}}_{2}=-\frac{\left|\mathbf{u}_{1}\right|^{2}}{\left|\mathbf{u}_{1}\right|^{2}\left|\mathbf{u}_{2}\right|^{2}-\left(\mathbf{u}_{1} \cdot \mathbf{u}_{2}\right)^{2}} \mathbf{u}_{1}+\frac{\mathbf{u}_{2}}{\left|\mathbf{u}_{1}\right|^{2}\left|\mathbf{u}_{2}\right|^{2}-\left(\mathbf{u}_{1} \cdot \mathbf{u}_{2}\right)^{2}} \mathbf{u}_{2}. \label{eq:reci}
\end{align}
Since $\left(p_{i}^{\star}, q_{i}^{\star}\right) \neq(0,0), i \in\{a, b\}$, we first assume $q_{a}^{\star} \neq 0$ and substitute $\mathbf{r}_{2}=1 / q_{a}^{\star}\left((\lambda, 0)-p_{a}^{\star} \mathbf{r}_{1}\right)$ for the calculation of reciprocal vectors using (\ref{eq:reci}). Similarly, we substitute $\mathbf{s}_{2}=1 / q_{b}^{\star}\left((\lambda, 0)-p_{b}^{\star} \mathbf{s}_{1}\right)$ for the calculation of $\hat{\mathbf{s}}_{i}$ if $q_{b}^{\star} \neq 0$. Subsequently, we can compute $\hat{\mathbf{t}}_{1}, \hat{\mathbf{t}}_{2}$ as functions of $\mathbf{r}_{i}$ and $\mathbf{s}_{j}$, where $i, j$ is 1 or 2 depending on the nonzero $q^\star_{a,b}$. Then, we use (\ref{eq:reci}) again to calculate $\mathbf{t}_{1}$ and $\mathbf{t}_{2}$, and confirm that (\ref{eq:thm1}) holds by direct calculation.
\end{proof}

Now suppose we have two distinct helical structures given by
\beqs
\calH_{a}(\mathbf{x}) & =&\left\{\mathbf{y}_{\mathbf{x}}^{a}(p, q)\left|\mathbf{y}_{\mathbf{x}}^{a}(p, q):=\mathbf{R}_{p \theta_{1 a}+q \theta_{2 a}}(\mathbf{x}-\mathbf{z})+\left(p \tau_{1 a}+q \tau_{2 a}\right) \mathbf{e}+\mathbf{z}, p \in \mathbb{Z}, q=1, \ldots,\right| q_{a}^{\star} \mid\right\} \\
\calH_{b}(\mathbf{x}) & =&\left\{\mathbf{y}_{\mathbf{x}}^{b}(p, q)\left|\mathbf{y}_{\mathbf{x}}^{b}(p, q):=\mathbf{R}_{p \theta_{1 b}+q \theta_{2 b}}(\mathbf{x}-\mathbf{z})+\left(p \tau_{1 b}+q \tau_{2 b}\right) \mathbf{e}+\mathbf{z}, p \in \mathbb{Z}, q=1, \ldots,\right| q_{b}^{\star} \mid\right\}
\eeqs
where the parameters in $a$ and $b$ satisfy the discreteness and non-degeneracy conditions
\[
\left\{\begin{array}{l}
p_{a}^{\star} \theta_{1 a}+q_{a}^{\star} \theta_{2 a}=2 \pi  \tag{15}\\
p_{a}^{\star} \tau_{1 a}+q_{a}^{\star} \tau_{2 a}=0

\end{array}\right.
\]
and
\[
\left\{\begin{array}{l}
p_{b}^{\star} \theta_{1 b}+q_{b}^{\star} \theta_{2 b}=2 \pi  \tag{16}\\
p_{b}^{\star} \tau_{1 b}+q_{b}^{\star} \tau_{2 b}=0
\end{array}\right.
\]
for some $\left(p_{a}^{\star}, q_{a}^{\star}\right),\left(p_{b}^{\star}, q_{b}^{\star}\right) \in \mathbb{Z}^{2}\setminus(0,0), \theta_{1 a} \tau_{2 a} \neq \theta_{2 a} \tau_{1 a}, \theta_{1 b} \tau_{2 b} \neq \theta_{2 b} \tau_{1 b}$. The corresponding 2D Bravais lattice vectors are
\begin{align}
& \mathbf{r}_{1 a}=\left(r \theta_{1 a}, \tau_{1 a}\right), \quad \mathbf{r}_{2 a}=\left(r \theta_{2 a}, \tau_{2 a}\right),\nonumber \\
& \mathbf{r}_{1 b}=\left(r \theta_{1 b}, \tau_{1 b}\right), \quad \mathbf{r}_{2 b}=\left(r \theta_{2 b}, \tau_{2 b}\right).
\end{align}
Following the discreteness conditions, the lattice vectors satisfy
\begin{align}
p_{a}^{\star} \mathbf{r}_{1 a}+q_{a}^{\star} \mathbf{r}_{2 a} & =(2 \pi r, 0), \nonumber \\
p_{b}^{\star} \mathbf{r}_{1 b}+q_{b}^{\star} \mathbf{r}_{2 b} & =(2 \pi r, 0).
\end{align}

Let $\left(\hat{\mathbf{r}}_{1 a}, \hat{\mathbf{r}}_{2 a}\right)$ be the reciprocal vectors of $\left(\mathbf{r}_{1 a}, \mathbf{r}_{2 a}\right)$, and $\left(\hat{\mathbf{r}}_{1 b}, \hat{\mathbf{r}}_{2 b}\right)$ be the reciprocal vectors of $\left(\mathbf{r}_{1 b}, \mathbf{r}_{2 b}\right)$. The differences of reciprocal vectors, $\left(\hat{\mathbf{m}}_{1}, \hat{\mathbf{m}}_{2}\right)=\left(\hat{\mathbf{r}}_{1 b}-\hat{\mathbf{r}}_{1 a}, \hat{\mathbf{r}}_{2 b}-\hat{\mathbf{r}}_{2 a}\right)$, are the reciprocal vectors of the moiré vectors $\left(\mathbf{m}_{1}, \mathbf{m}_{2}\right)$, as discussed in Section \ref{sec:2D}. Recalling Theorem \ref{thm}, we know that $\left(\mathbf{m}_{1}, \mathbf{m}_{2}\right)$ satisfies the similar discreteness condition:
\begin{equation}
\left(p_{b}^{\star}-p_{a}^{\star}\right) \mathbf{m}_{1}+\left(q_{b}^{\star}-q_{a}^{\star}\right) \mathbf{m}_{2}=(2 \pi r, 0) 
\end{equation}

This equation implies that the formed moiré pattern is also a discrete helical structure. We may write $\mathbf{m}_{1}$ and $\mathbf{m}_{2}$ as
\begin{equation}
\mathbf{m}_{1}=\left(r \theta_{1 m}, \tau_{1}\right), \quad \mathbf{m}_{2}=\left(r \theta_{2 m}, \tau_{2}\right).
\end{equation}
Then the helical moiré pattern described by
\begin{equation}
\mathbf{y}_{m}(p, q)=\mathbf{R}_{p \theta_{1 m}+q \theta_{2 m}}(\mathbf{x}-\mathbf{z})+\left(p \tau_{1 m}+q \tau_{2 m}\right) \mathbf{e}+\mathbf{z} 
\end{equation}
is a  helical structure if $\left(p_{b}^{\star}-p_{a}^{\star}, q_{b}^{\star}-q_{a}^{\star}\right) \neq \mathbf{0}$ and $\theta_{1 m} \tau_{2 m} \neq \theta_{2 m} \tau_{1 m}$. Now we compute the moiré vectors $\left(\mathbf{m}_{1}, \mathbf{m}_{2}\right)$ explicitly. The procedure is straightforward:
\begin{enumerate}
  \item Given $\left(\mathbf{r}_{1 a}, \mathbf{r}_{2 a}\right)$ and $\left(\mathbf{r}_{1 b}, \mathbf{r}_{2 b}\right)$, compute the corresponding reciprocal vectors $\left(\hat{\mathbf{r}}_{1 a}, \hat{\mathbf{r}}_{2 a}\right)$ and $\left(\hat{\mathbf{r}}_{1 b}, \hat{\mathbf{r}}_{2 b}\right)$.

  \item Compute $\left(\hat{\mathbf{m}}_{1}, \hat{\mathbf{m}}_{2}\right)=\left(\hat{\mathbf{r}}_{1 b}-\hat{\mathbf{r}}_{1 a}, \hat{\mathbf{r}}_{2 b}-\hat{\mathbf{r}}_{2 a}\right)$.

  \item Compute the moiré vectors $\left(\mathbf{m}_{1}, \mathbf{m}_{2}\right)$ as the reciprocal vectors of ( $\hat{\mathbf{m}}_{1}, \hat{\mathbf{m}}_{2}$ ) using (\ref{eq:reci}).

\end{enumerate}

By direct calculation, $\mathbf{m}_{1}$ and $\mathbf{m}_{2}$ have the analytical forms as:
\beqs
\mathbf{m}_{1}=\left(r \frac{\theta_{1 b}(\theta_{2 a} \tau_{1 a}-\theta_{1 a} \tau_{2 a})+\theta_{1 a}(\theta_{1 b} \tau_{2 b}-\theta_{2 b} \tau_{1 b})}{(\theta_{2 a}-\theta_{2 b})(\tau_{1 a}-\tau_{1 b})+(\theta_{1 b}-\theta_{1 a})(\tau_{2 a}-\tau_{2 b})}, \frac{\tau_{1 b}(\theta_{2 a} \tau_{1 a}-\theta_{1 a} \tau_{2 a})+\tau_{1 a}(\theta_{1 b} \tau_{2 b}-\theta_{2 b} \tau_{1 b})}{(\theta_{2 a}-\theta_{1 b})(\tau_{1 a}-\tau_{1 b})+(\theta_{1 b}-\theta_{1 a})(\tau_{2 a}-\tau_{2 b})}\right), \nonumber \\
\mathbf{m}_{2}=\left(r \frac{\theta_{2 b}(\theta_{2 a} \tau_{1 a}-\theta_{1 a} \tau_{2 a})+\theta_{2 a}(\theta_{1 b} \tau_{2 b}-\theta_{2 b} \tau_{1 b})}{(\theta_{2 a}-\theta_{2 b})(\tau_{1 a}-\tau_{1 b})+(\theta_{1 b}-\theta_{1 a})(\tau_{2 a}-\tau_{2 b})}, \frac{\tau_{2 b}(\theta_{2 a} \tau_{1 a}-\theta_{1 a} \tau_{2 a})+\tau_{2 a}(\theta_{1 b} \tau_{2 b}-\theta_{2 b} \tau_{1 b})}{(\theta_{2 a} -\theta_{2b})(\tau_{1 a}-\tau_{1 b})+(\theta_{1 b}-\theta_{1 a})(\tau_{2 a}-\tau_{2 b})}\right). \label{eq:m1m2}
\eeqs
We write the above equations in condensed forms as
$\bfm_1=(r\theta_{1m}, \tau_{1m})$ and $
\bfm_2=(r\theta_{2m}, \tau_{2 m})$. 
Finally, we can define the generators of the moiré pattern as
\begin{equation}
g_{1 m}=\left(\mathbf{R}_{\theta_{1 m}} \mid\left(\mathbf{I}-\mathbf{R}_{\theta_{1 m}}\right) \mathbf{z}+\tau_{1 m} \mathbf{e}\right), \quad g_{2 m}=\left(\mathbf{R}_{\theta_{2 m}} \mid\left(\mathbf{I}-\mathbf{R}_{\theta_{2 m}}\right) \mathbf{z}+\tau_{2 m} \mathbf{e}\right). 
\end{equation}

\subsubsection{Example}
Here we provide an example in Fig.~\ref{fig:helical} to illustrate the periodicity of the bilayer helical structures and the helical moiré pattern. The parameters for $\calH_{a}(\mathbf{x})$ and $\calH_{b}(\mathbf{x})$ are:

\begin{enumerate}
  \item $r=1,(\theta_{1a},\tau_{1a})=(2\pi/175,-0.048), (\theta_{2a},\tau_{2a})=(4\pi/175,0.024),(p_{a}^{\star}, q_{a}^{\star})=(35,70)$.
  \item $r=1,(\theta_{1b},\tau_{1b})=(2\pi/250,-0.024), (\theta_{2b},\tau_{2b})=(2\pi/75,0.02),\left(p_{b}^{\star}, q_{b}^{\star}\right)=(50,60)$.
\end{enumerate}
According to Eq.~(\ref{eq:m1m2}), the  parameters for the formed helical moiré pattern are:
$$
r=1, (\bar{\theta}_{1},\bar{\tau}_{1})=(0.074,0.071), (\bar{\theta}_{2},\bar{\tau}_{2})=(0.739,0.106), (\bar{p}^\star,\bar{q}^\star) = (15, -10).
$$

\begin{figure}[!h]
    \centering
    \includegraphics[width=0.7\textwidth]{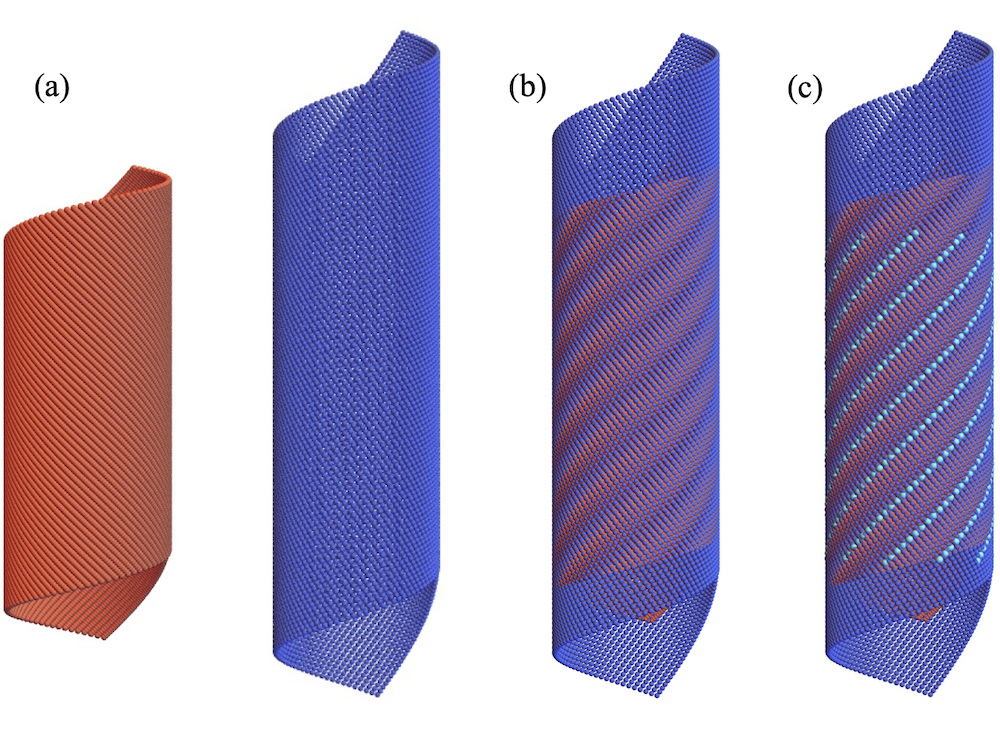}
    \caption{(a) Two helical structures with parameters $(\theta_{1a},\tau_{1a})=(2\pi/175,-0.048), (\theta_{2a},\tau_{2a})=(4\pi/175,0.024)$ and $(\theta_{1b},\tau_{1b})=(2\pi/250,-0.024), (\theta_{2b},\tau_{2b})=(2\pi/75,0.02)$. (b) The superposed moiré pattern. (c) The moiré pattern is indicated by light blue atoms with the calculated parameters $(\bar{\theta}_{1},\bar{\tau}_{1})=(0.074,0.071), (\bar{\theta}_{2},\bar{\tau}_{2})=(0.739,0.106)$.}
    \label{fig:helical}
\end{figure}

\section{Conformal moiré pattern}
A group $\calG$ is a conformal Euclidean group if $\calG$ is a group of affine linear maps of the form $(\eta \bfQ|\bfc)$ with $\eta>0, \bfQ \in \text{O}(3), \bfc \in \mathbb{R}^3$, and some elements have $\eta \neq 1$. 
An example of conformal Euclidean groups \footnote{Arun Soor derived a full characterization in a note during his visit in Richard James' group.} is given by
\beq
\calG = \{g^p h^q: (p,q) \in \mathbb{Z}^2\},~ g=(\eta \bfR_{\theta}|(\bfI - \eta\bfR_{\theta}) \bfz),~ h = ( \bfR_{\varphi}|(\bfI - \bfR_{\varphi}) \bfz)
\eeq
where $\eta \in \mathbb{R}$, $\eta \neq 1$, $\bfR_{\theta}, \bfR_{\varphi} \in \mathrm{SO}(3)$, $\varphi = 2\pi/n, n\in \mathbb{Z}$. 
Then the conformal structure generated by $\calG$ is given by
\beq
\calG(\bfx) = \{\eta^p \bfR_{p\theta+q\varphi} (\bfx - \bfz) + \bfz:(p,q) \in \mathbb{Z}^2\}.
\eeq
The parameter $\eta \neq 1$ yields a dilatation in the direction perpendicular to the axis of $\bfR_{\theta}$, and therefore, a conformal structure is not an objective structure. 
It should be noted that conformal structures are not physically realistic, as the dilation parameter  $\eta$  can lead to accumulation points with divergent energy or infinitely distant points that occupy an unphysically large space. A strategy to avoid this is to cut off the domain of $(p,q)$.
Given the variety of conformal structures, it is useful to analyze the corresponding moiré patterns locally, following the group orbit framework and the Fourier approach.
\begin{figure}[!h]
    \centering
    \includegraphics[width=\textwidth]{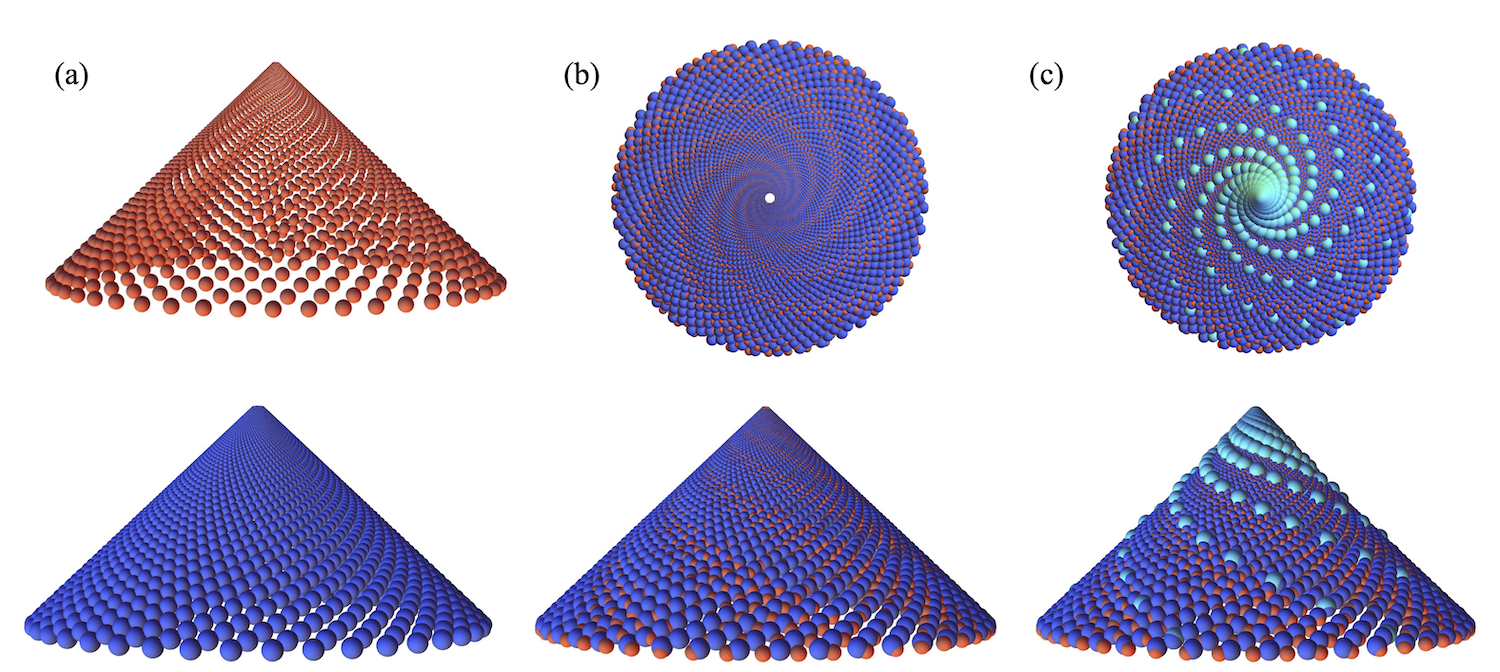}
    \caption{(a) Two conformal structures with parameters $\bfz=(0,0,0),\bfx=(0.1,0,0.1), \eta_a=1.03, \theta_a=2\pi/31, \varphi_a=-2\pi/43$ and $\eta_b=1.04, \theta_b=2\pi/35, \varphi_b=-2\pi/50$. (b) Two views of the superposed moiré pattern. (c) The conformal moiré pattern is indicated by light blue atoms with parameters $\bar{\eta}=0.887, \bar{\theta}= 0.36\pi, \bar{\varphi}=-2\pi/7$.}
    \label{fig:conformal}
\end{figure}

Now we superpose two conformal structures with distinct parameters $\eta, \theta$ and $\varphi$. In the example shown in Fig.~\ref{fig:conformal}, the blue and red structures have parameters $\bfx=(0.1,0,0.1), \bfz=(0,0,0),\eta_a =1.03, \theta_a=2\pi/31, \varphi_a=-2\pi/43$ and $\eta_b=1.04, \theta_b=2\pi/35, \varphi_b=-2\pi/50$ respectively, ensuring that the two structures are distinct and close. Again, we use an isometric deformation that maps the cone-like structures onto a plane and then compute the moiré pattern with the Fourier approach. Specifically, the isometric deformation is chosen as (in polar coordinates $(r,\theta)$)
\beq
f(\calG(\bfx)) = (\eta^p|\bfx - \bfz|, p \theta + q \varphi). \label{eq:polar}
\eeq
We then derive the moiré pattern for the 2D lattices given by the primitive lattice vectors 
\beqs
&&\bfr_1 = (\log\eta_a, \theta_a),~ \bfr_2=(0,\varphi_a), \nonumber\\
&&\bfs_1 = (\log\eta_b, \theta_b), ~ \bfs_2=(0,\varphi_b).
\eeqs
Notice that here we revise the $r$ coordinate in (\ref{eq:polar}) to form a `2D lattice'.
Following the Fourier approach, the lattice vectors for the moiré pattern are given by 
\beqs
(\log\bar{\eta}, \bar{\theta}) &=& \left(\frac{\log \eta_a \log \eta_b}{\log \eta_a - \log \eta_b}, \frac{\theta_b \varphi_a \log \eta_a- \theta_a \varphi_b \log\eta_b}{(\varphi_a-\varphi_b)(\log\eta_a - \log\eta_b)}\right),\nonumber\\
(0, \bar{\varphi})&=&\left(0, \frac{\varphi_a\varphi_b}{\varphi_b - \varphi_a}\right),
\eeqs
yielding the parameters  $\bar{\eta}=0.887, \bar{\theta}= 0.36\pi, \bar{\varphi}=-2\pi/7$ for the moiré pattern in light blue shown in Fig.~\ref{fig:conformal}(c). 

For the objective and conformal moiré patterns discussed in this paper, we use a similar Fourier approach to study their geometry, with an additional modification needed for the conformal case. Our analysis primarily focuses on a 2D plane, as the lattice points in these examples lie on developable surfaces—planes, cylinders, and cones—that are isometric to a plane. For other types of objective structures, such as those on a sphere, further consideration is needed.

\section{Discussion}
Objective structure, as a concept of general periodic atomic structures, has been applied in the study of phase transformations, molecular dynamics, origami design, etc. In this paper, we contribute to a new theme about using objective structures to study moiré patterns. By superposing two layers of objective structures, objective moiré patterns are constructed. We mainly showcase three objective moiré patterns formed by ring structures, 2D lattice structures and helical structures, respectively. We provide the analytical formulas for the geometries of these moiré patterns. Not surprisingly, the moiré patterns follow the symmetry of the original objective structures, but of course, have different lattice parameters. In addition, we give an example of a conformal moiré pattern constructed by overlaying two conformal structures (which are not objective structures), indicating the versatility of the group orbit idea.

The method we use to compute the geometry of objective moiré patterns is the augmented Fourier approach inspired by the classic 2D lattice case. Notice that the three examples of objective moiré patterns we solve here can be mapped isometrically onto a plane. Thus we may solve the 2D moiré pattern on a plane first, then map it back to the objective structures to obtain the parameters. For the conformal structure case, it is also possible to map it onto a plane, but it needs a further transformation to obtain sensible 2D lattices. However, not all objective structures can be mapped isometrically onto a plane, for example if the effective surface of the lattice points has non-zero Gaussian curvature (i.e. buckyballs). In that sense, the problem of objective moiré pattern is not fully solved yet. Recalling the general formulation of objective structures $\{g_1(\bfx), g_2(\bfx),\dots\}$ and the corresponding isometry group $\{g_1,g_2,\dots\}$, we may propose a possible approach of solving the problem directly in the space of objective structures, not restricted on a plane. The key point is to find a generalized ``wave function'' that reaches the maximal magnitude at lattice points. Then the periodicity/geometry of the moiré pattern is simply the ``wave packet'' by superposing the two sets of wave functions. This has been done partially for the helical structures (the wave functions have been found) \cite{banerjee2015spectral, justel2016bragg}, but the general approach is still to be developed.

This work is purely geometric, aiming to offer foundational insights into more complex moiré patterns, especially those in 3D, which remain largely underexplored. In future studies, it would be valuable to integrate mechanics and physics into the study of objective moiré patterns, including aspects such as membrane deformability, X-ray diffraction, wave manipulation, molecular dynamics, and other applications.

\section*{Acknowledgements}
This paper is dedicated to Dick James on the occasion of his 70$^\text{th}$ birthday. Dick is a great researcher, mentor, colleague and friend to many of us. The financial support for this work was provided by the National Natural Science Foundation of China (grant no. 12472061).
\bibliographystyle{abbrv}
\bibliography{bilayer.bib} 
\end{document}